\definecolor{mycustomgray}{RGB}{238,238,238}  
\date{}
\newcommand{\Z}{\mathds{Z}}
\newcommand{\f}{\flat}
\newcommand{\s}{\sharp}
\newcommand{\dflat}{\tikz[baseline=-1.2mm] \node {\reflectbox{$\flat$}};}
\newcommand{\sflat}{\tikz[baseline=-1.2mm] \node {\reflectbox{$\flat$}$\flat$};}
\newcommand{\dsharp}{\hskip3pt \tikz[baseline=-1.2 mm] {%
  \clip (-2pt,-6pt) rectangle (-.2pt,6pt); \node at (0,0) {$\sharp$}; \hskip3pt}}
\newcommand{\ssharp}{\tikz[baseline=-1.2mm] {%
  \node[inner sep=0.25mm] at (0,0) {$\sharp$};\node at (1.7pt,0.55pt) {$\sharp$};}}
\newcommand{\lan}{\langle}
\newcommand{\ran}{\rangle}
\theoremstyle{plain}
\newtheorem{theorem}{Theorem}[section]
\newtheorem{lemma}[theorem]{Lemma}
\newtheorem{proposition}[theorem]{Proposition}
\theoremstyle{definition}
\newtheorem{remark}[theorem]{Remark}
\newtheorem{example}[theorem]{Example}
\newtheorem{definition}[theorem]{Definition}
\title{Algebraic Structures in Microtonal Music}
\author{Veronica Flynn, Carmen Rovi}
\begin{document}

\maketitle
\begin{abstract}
 We will discuss how certain group theory structures are found in music theory. Western music splits the octave into 12 equal tones called half-steps. We can take this division further and split the octave into 24 equal tones by splitting each half-step in two, called a quarter-step. By assigning each of these 24 notes a number, we can discuss musical actions mathematically. 
 In this paper, we analyze 24-tone microtonal music and explore how musical and harmonic structures in this system can be interpreted in terms of group-theoretic structures. This work extends the study in \cite{dihedralgroups}.
\end{abstract}

\section{Introduction}

When a string is plucked, it vibrates at a certain frequency and produces a certain pitch. If the length of the vibrating portion of the string is then halved by pressing it down at its midpoint, the string vibrates twice as fast, producing a pitch that we perceive as similar to the original one. In musical terms, this new pitch is exactly one octave higher than the original. The relation between a pitch and its equivalent separated by an octave is a fundamental concept in acoustics and a main pillar in all musical systems.  But how about the pitches in between?

In Western music, we divide the octave into 12 different pitches, each separated from the next by semitones or halfsteps; this is an example of a music system. But many other music systems divide the octave in different ways.  Music systems that divide the octave into more than 12 different pitches are referred to as microtonal systems. They use intervals smaller than the half-steps created by the 12 Western pitches. 
For example, some traditional Indian music uses 22 tones in an octave. Another important example is that of Arabic music, which uses a system close to 24 tones in a melodic way.  In \cite{fiftysystem}, music theorist and composer Ben Johnston argues for a 53-tone system because the interval between each note is the smallest our brains can comprehend. 

In this paper, we will consider a 24-quarter-tone system. What makes the 24-tone system particularly interesting is its relationship with the 12-tone Western system. First, the 12 pitches of Western music are still present in the 24-tone system. 
There are different ways to use this microtone system. The two main ways are:
\begin{enumerate}
    \item To use the additional microtones as a colorful melodic effect while the harmony (chords) stay the same as in usual Western 12-note system.
    \item To use the 24 microtone system as a harmonic system in its own right. 
\end{enumerate}

As music theorist Ivan Wyschnedgradsky shows in his ``Manual of Quarter-Tone Harmony", \cite{manual}, we can create the 24-tone system in a similar way to how the 12-tone system was developed. By dividing each half step into two quarter steps, the additional 12 new tones essentially fill in the gaps between the pre-existing 12 tones. As Wyshnedgradsky explains, we can view this 24-tone system as two copies of the 12-tone system, with one shifted up by a quarter step, similar to two interlocked combs.

Many of the ideas and functions in this paper have been adapted from \cite{dihedralgroups} to fit a 24-tone environment. However, there are some unexpected results. Some of these arise from introducing the ``neutral triads", which are chords specific to the 24-tone system. These are introduced in Section \ref{Sec:neutral triads}, and Theorem \ref{Thm: cyclic} is the main result relating to these triads.
The other surprising result is given in Theorem \ref{Thm:two groups}.  Considering functions that are defined analogously to those in \cite{dihedralgroups}, we do not obtain a dihedral group isomorphic to $D_{24}$, which would be the expected result, but rather two groups isomorphic to $D_{12}$. This is an algebraic reflection of the fact that musically, the quarter-tone system can be viewed as two interlocked 12-tone  systems.

\section{Quarter-Tones and Integers Modulo 24}
As mentioned above, in Western music, we use equal-tempered tuning to divide the octave into 12 unique tones listed below:
%
\begin{table}[ht] 
\begin{center}
\rowcolors{1}{mycustomgray}{mycustomgray}
\begin{tabular}{|c|c|c|c|c|c|c|c|c|c|c|c|}
\hline
    C & C$\s$ &D &D$\s$ &E &F &F$\s$ &
    G & G$\s$ & A & A$\s$ &B \\
     & D$\f$ & & E$\f$ & & & G$\f$ & & A$\f$ & 
     &B$\f$ &  \\
\hline
\end{tabular}
\caption{Original tones in the 12-tone system.}
\end{center}
\end{table}

The interval between two consecutive notes is called a \textit{half-step}. 
The notation $\s$ represents raising a pitch by a half-step, while $\f$ represents lowering a pitch by a half-step.
Observe that some pitches have two-letter names; these are called \textit{enharmonic equivalents}.
If  two notes are separated by a whole number of octaves, we say that they are equivalent. Two such pitches would be in the same \textit{pitch class}; thus, in Western music, the octave is divided into 12 unique pitch classes. 
However, we can take this division even further by splitting each half-step in two. The result is a microtonal system that splits the octave into 24 pitch classes:


\begin{table}[ht]
  \begin{center}
    \begin{tabular}{|c|c|c|c|c|c|c|c|c|c|c|c|}
    \hline
        \cellcolor{mycustomgray}C &C$\dsharp$ 
        & \cellcolor{mycustomgray}C$\s$ 
        & C$\ssharp$ 
        & \cellcolor{mycustomgray}D &D$\dsharp$
        & \cellcolor{mycustomgray}D$\s$ 
        &D$\ssharp$ 
        &\cellcolor{mycustomgray} E 
        &E $\dsharp$ 
        &\cellcolor{mycustomgray}F 
        &F$\dsharp$ \\
\cellcolor{mycustomgray}
        &D$\sflat$ 
        &\cellcolor{mycustomgray}D $\f$ 
        & D$\dflat$ 
        & \cellcolor{mycustomgray}
        & E$\sflat$ 
        & \cellcolor{mycustomgray}E$\f$ 
        & E$\dflat$ 
        & \cellcolor{mycustomgray}
        &F$\dflat$ 
        & \cellcolor{mycustomgray}
        & G$\sflat$\\
    \hline
    \end{tabular} $\rightarrow$
    
  \begin{tabular}{|c|c|c|c|c|c|c|c|c|c|c|c|}
    \hline
         \cellcolor{mycustomgray} F$\s$ 
         & F$\ssharp$ 
         & \cellcolor{mycustomgray} G 
         & G$\dsharp$ 
         & \cellcolor{mycustomgray}G$\s$ 
         & G$\ssharp$ 
         & \cellcolor{mycustomgray}A 
         & A$\dsharp$ 
         & \cellcolor{mycustomgray}A$\s$ 
         & A$\ssharp$ 
         & \cellcolor{mycustomgray}B 
         & B\dsharp\\
         \cellcolor{mycustomgray} G$\f$
         & G$\dflat$ 
         & \cellcolor{mycustomgray}
         &A$\sflat$ 
         & \cellcolor{mycustomgray} A$\f$ 
         & A$\dflat$ 
         & \cellcolor{mycustomgray}
         &B$\sflat$ 
         & \cellcolor{mycustomgray} B$\f$
         & B$\dflat$ 
         & \cellcolor{mycustomgray}
         & C$\dflat$  \\
    \hline
    \end{tabular}
    \caption{The \textit{original tones} are shown here in grey, the \textit{new tones} appear in white.}
\end{center}
\end{table}

Here, the interval between two consecutive notes is called a \textit{quarter-step} and is half the interval of a half-step. 
In summary, the various symbols to alter a given pitch by quarter-steps are as follows:
\begin{itemize}
    \item Raising a note:
    \begin{enumerate}
        \item by one quarter-step: $\dsharp$
        \item by two quarter-steps (or equivalently a half-step): $\sharp$
        \item by three quarter-steps: $\ssharp$
    \end{enumerate}
    \item Lowering a note: 
      \begin{enumerate}
        \item by one quarter-step: $\dflat$
        \item by two quarter-steps (or equivalently a half-step): $\flat$
        \item by three quarter-steps: $\sflat$
    \end{enumerate}
\end{itemize}
\begin{example}
The note $G \ssharp$ would be read $G$ three-quarters sharp, and $A \dflat$ would read $A$ quarter-flat. These notes are enharmonic as they will sound the same.
\end{example}
Moving forward, we will refer to the notes from the Western 12-tone system as the \textit{original tones}, and the additional 12 pitch classes will be the \textit{new tones}. 

The 24-tone system is an example of a microtonal system. What makes this system particularly interesting is its relation to the original 12-tone system. The original 12 tones are still present in this system, so we can discuss musical structures like harmony as we normally would in traditional 12-tone music. Moreover, it is discussed in \cite{manual} that from a musical perspective, one can view the 24-tone system as two copies of the 12 pitch classes, where one is shifted up by a quarter step, similar to two interlocked combs. 

Music theorists have found it useful to translate each of the original 12 pitch classes to integers modulo 12, with 0 being $C$. 
Similarly, we can translate each of the 24 pitch classes to integers \textit{modulo 24}, with 0 as $C$. 
Doing this allows us to create a musical clock, shown in Figure \ref{fig: clock}.

\begin{figure}[ht]
\centering
\scalebox{0.6}{
\begin{tikzpicture}
  \def\radius{5}
  \def\nodesize{0.8}
  \def\gapangle{8}
  \def\labeloffset{0.98} 

  \newcommand{\notelabel}[1]{%
    \ifcase#1 $C$%
    \or $C\dsharp$/$D \sflat$%
    \or $C\sharp$/$D \flat$%
    \or $C\ssharp$/$D \dflat$%
    \or $D$%
    \or \hspace{2pt} $D\dsharp$/$E \sflat$%
    \or $D\sharp$/$E \flat$%
    \or \hspace{2pt}$D\ssharp$/$E \dflat$%
    \or $E$%
    \or $E\dsharp$/$F \dflat$%
    \or $F$%
    \or $F\dsharp$/$G \sflat$%
    \or $F\sharp$/$G \flat$%
    \or $F\ssharp$/$G \dflat$%
    \or $G$%
    \or $G\dsharp$/$A \sflat$%
    \or $G\sharp$/$A \flat$%
    \or $G\ssharp$/$A \dflat$%
    \or $A$%
    \or $A\dsharp$/$B \sflat$%
    \or $A\sharp$/$B \flat$%
    \or $A\ssharp$/$B \dflat$%
    \or $B$%
    \or $B\dsharp$/$C \dflat$%
    \fi
  }

  \foreach \i in {0,...,23} {
    \pgfmathsetmacro{\startangle}{90 - 360*\i/24 - \gapangle/2}
    \pgfmathsetmacro{\endangle}{90 - 360*(\i+1)/24 + \gapangle/2}
    \draw[thick, gray]
      ({\radius*cos(\startangle)}, {\radius*sin(\startangle)})
        arc[start angle=\startangle, end angle=\endangle, radius=\radius];
  }

  \foreach \i in {0,...,23} {
    \pgfmathsetmacro{\angle}{90 - 360*\i/24}
    \pgfmathsetmacro{\x}{\radius*cos(\angle)}
    \pgfmathsetmacro{\y}{\radius*sin(\angle)}

    \node[circle, draw, fill=white, minimum size=\nodesize cm]
      at (\x, \y) {\i};

    \pgfmathsetmacro{\lx}{(\radius + \labeloffset)*cos(\angle)}
    \pgfmathsetmacro{\ly}{(\radius + \labeloffset)*sin(\angle)}
    \node[font=\scriptsize, align=center] at (\lx, \ly) {\notelabel{\i}};
  }

\end{tikzpicture}
}
 \caption{24 Pitches on the Musical Clock.}
    \label{fig: clock}
\end{figure}

Moving one notch around the clock correlates to adding by one and moving up by a quarter-step.
We can use this clock to easily determine the interval (or number of quarter steps) between two notes. 
\begin{example}
   If we want to know the number of quarter-steps between $B$ and $D\ssharp$, we can see that on the clock, the pitch class $B = 22$ and $D\ssharp = 7$. The distance between these two tones is
   \begin{equation}
       (7 - 22) = -15 \equiv 9 \pmod{24}.
   \end{equation}
\end{example}
The translation from the original 12 tones to arithmetic modulo 12 has been previously discussed in several articles, including  \cite{dihedralgroups}. This translation allows us to analyze musical actions mathematically. In the next three sections, we will use a similar approach, but in our case, using modulo 24 arithmetic to analyze the 24-microtone system.

\section{Transposition and Inversion}
Two of the musical actions we will discuss are transposition and inversion. Musically, \textit{transposition} is shifting a note up or down by a number of quarter tones. In many pop songs, it's easy to recognise the same melody sung higher or lower. This is a transposition of the melody. More generally, transposition is used by singers adapting a composition to their personal voice range. 
Transposition is also used structurally in composition.  For example, in a fugue, we first hear the subject (or main theme) and immediately afterwards we can listen to the same theme at a distance of a fifth (See Figure \ref{fig:fugue}). So with transposition, the relationship between the notes of the original melody remains the same, even if the pitch of each note is changed.

\begin{figure}[htbp]
    \centering
    \includegraphics[width=0.7\linewidth, trim=0 0 0 8.5, clip]{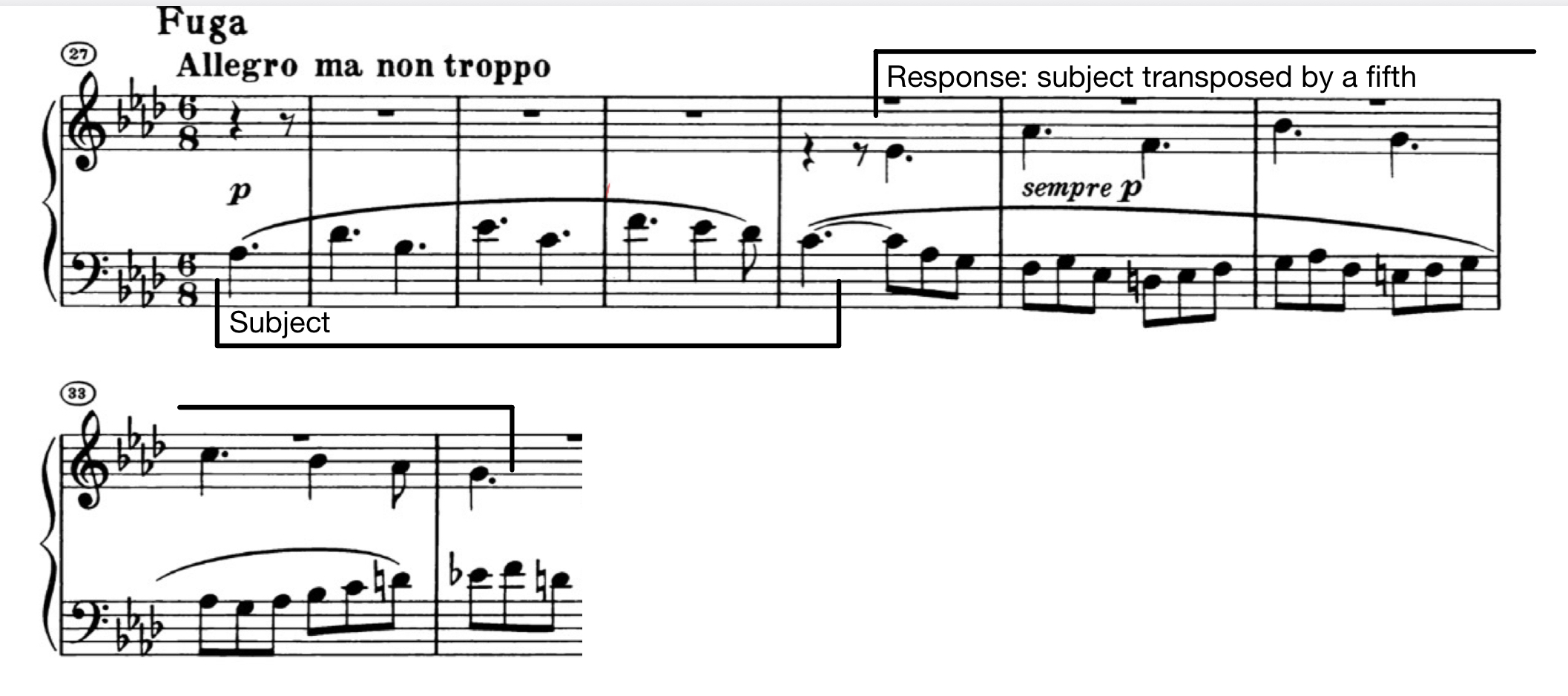}
    \caption{Subject of a fugue (op.\ 110, Beethoven) and the response transposed a fifth higher.}
    \label{fig:fugue}
\end{figure}

On the other hand, \textit{inversion} is like flipping a melody on its head. The melody has the same pattern of intervals but moves in the opposite direction. For example, if a melody starts on a $C$ and moves up 5 quarter-steps, the inversion would be starting on $C$ and moving down 5 quarter-steps. 
We can define these two musical actions as functions using the musical clock.

\begin{definition} \label{Def:transposition and inversion formulas}
Mathematically, the Transposition and Inversion functions are defined as follows:
\begin{enumerate}
    \item The Transposition function $T_n$ is given by 
\begin{center}   
   $T_n: \Z_{24} \longrightarrow \Z_{24}$  \\
    $T_n(x) := x + n \text{ mod } 24$,
    \end{center}
where $x$ is the starting pitch and $n$ is the number of quarter-steps we transpose by.
 \item The Inversion function $I_n$ is given by
\begin{center}
    $I_n : \Z_{24} \longrightarrow \Z_{24}$ \\
    $I_n(x) := -x + n \text{ mod } 24$,
\end{center}
where $n$ is the pitch we are inverting over and $x$ is the pitch that is being inverted. Geometrically, there is an axis of reflection in the musical clock spanned by the line through $\frac{n}{2}$ and $\frac{n}{2}+12$, and $x$ is reflected on that axis.
 \end{enumerate}
\end{definition}

\begin{remark}
    Note that in \cite{dihedralgroups}, the authors also use functions labeled $T_n$ and $I_n$.  Our functions are close analogs of these functions, but they are not identical to them. The context in \cite{dihedralgroups} is the analysis of a 12-tone system, so for them, $T_n$ and $I_n$ are taken modulo 12 rather than modulo 24. 
\end{remark}

The transposition and inversion functions $T_n$ and $I_n$ can be represented on the musical clock. 
\begin{example}
The transposition $T_1$ correlates to a rotation of the clock by $\frac{1}{24}$, and the inversion $I_0$ correlates to a flip of the clock along the 0-12 axis.
\end{example}
As shown in \cite{dihedralgroups}, in the original 12-tone system, $T_n$ and $I_n$ generate the dihedral group of a 12-gon. Similarly, in the 24-tone system, we can formulate an analog result.
\begin{proposition}
 The modulo 24 transposition and inversion functions $T_1$ and $I_0$ generate the dihedral group of symmetries of a 24-gon. 
\end{proposition}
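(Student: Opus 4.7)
The plan is to verify the three standard presentations of $D_{24}$ (the dihedral group on the $24$-gon, of order $48$) directly using the formulas in Definition \ref{Def:transposition and inversion formulas}. Concretely, I will produce an element of order $24$, an element of order $2$, and show they satisfy the reflection-rotation relation; then a counting argument will rule out any collapse of the abstract presentation.

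First, I would compute that $T_1^k(x) = x + k \bmod 24$ by a quick induction, which immediately gives $T_1^{24} = \mathrm{id}$ and $T_1^k \neq \mathrm{id}$ for $1 \leq k \leq 23$; hence $T_1$ has order $24$. Next, $I_0(x) = -x \bmod 24$ satisfies $I_0^2(x) = x$, so $I_0$ has order $2$. The key relation is then
\begin{equation*}
  (I_0 \circ T_1)(x) = -(x+1) = (-x) - 1 = (T_1^{-1} \circ I_0)(x),
\end{equation*}
so $I_0 T_1 I_0 = T_1^{-1}$, which is exactly the defining relation of $D_{24} = \langle r, s \mid r^{24} = s^2 = 1,\ srs = r^{-1}\rangle$. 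This yields a surjective homomorphism $D_{24} \twoheadrightarrow \langle T_1, I_0\rangle$.

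To finish, I need to show the $48$ candidate elements $\{T_1^k : 0 \le k < 24\} \cup \{T_1^k \circ I_0 : 0 \le k < 24\}$ are genuinely distinct, so no nontrivial kernel occurs. The first family consists of maps $x \mapsto x+k$, which are pairwise distinct for distinct $k \bmod 24$, and one can check that $T_1^k \circ I_0$ acts as $x \mapsto -x + k$, i.e.\ coincides with $I_k$. These maps are also pairwise distinct, and none of them agrees with any $T_1^j$ since a map of the form $x \mapsto x + j$ is orientation-preserving on the clock while $x \mapsto -x + k$ is orientation-reversing (alternatively, they induce different permutations of the two antipodal pairs $\{0,12\}$ and $\{1,13\}$). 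So $|\langle T_1, I_0\rangle| = 48 = |D_{24}|$, and the surjection above must be an isomorphism.

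I do not anticipate a serious obstacle here; the only subtlety is being careful that ``dihedral group of the $24$-gon'' in the statement means the group of order $48$ (not $24$), so the counting step must correctly produce $48$ distinct functions on $\Z_{24}$. Everything else is a routine verification of relations and orders using Definition \ref{Def:transposition and inversion formulas}.
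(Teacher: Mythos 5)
Your proof is correct and takes essentially the same route as the paper: establish that $T_1$ has order $24$, $I_0$ has order $2$, and the conjugation relation $I_0 T_1 I_0 = T_1^{-1}$ holds, then conclude the group generated is the dihedral group of order $48$. The only difference is that you spell out the counting of the $48$ distinct functions $T_1^k$ and $T_1^k \circ I_0 = I_k$ to rule out a proper quotient of the abstract presentation, a step the paper leaves implicit in its table of composition relations.
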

\begin{proof}   

   One can readily check these relations: 
\begin{align}
    (T_1)^n &= T_{n \bmod 24} \notag \\
    T_m \circ T_n &= T_{m + n \bmod 24} \notag \\
    T_m \circ I_n &= I_{m + n \bmod 24} \notag \\
    I_m \circ T_n &= I_{m - n \bmod 24} \notag \\
    I_m \circ I_n &= T_{m -n \bmod 24}, \notag 
\end{align}
We can see that the set generated by $T_1$ and $I_0$, together with the relations above, forms a group: any combination of transpositions and inversions will yield another transposition or inversion, so there is closure. 
There is an identity element $1=T_0$. Each element has an inverse, and composition is associative.

Further, 
from the relations above we observe that $(T_1)^{24} = 1 = (I_0)^2$, and $I_m \circ T_n \circ I_m = (T_n)^{-1}$. So we can see that this group is isomorphic to the dihedral group with 48 elements, $D_{24}$,
$$ T I = \{\langle T_1, I_0 \rangle\ ~\vert~ (T_1)^{24} = 1=(I_0)^2, I_m \circ T_n \circ I_m = (T_n)^{-1}  \} \cong D_{24}.  \qedhere $$
\end{proof}
\raggedbottom

\section{Triad chords}
We can use the musical clock to analyze even more complex musical objects: chords.
In music, a chord is multiple notes played simultaneously. Because these notes are played at the same time, we can view chords as unordered sets where the elements of the sets are numbers that correlate to the pitches in the chord. We will specifically be looking at three types of three-note chords: \textit{major triads}, \textit{minor triads}, and \textit{neutral triads}.  Major and minor triads are the usual basic triads in the 12-note system, while the neutral triad is exclusive of the 24-note system (see \cite{manual}).
\begin{definition}\label{def:triads} The three different types of triads in a 24-tone system are:
\begin{enumerate}
    \item A \textbf{major triad} consists of a root note $x$, a second note 8 quarter-tones above the root, and a third note 14 quarter-tones above the root. Thus, a major triad takes the form $\langle x, x + 8, x + 14 \rangle$.
    \item A \textbf{minor triad} consists of a root note, a second note 6 quarter-tones above the root, and a third note 14 quarter-tones above the root. Thus, a minor triad takes the form $\langle x, x +6, x + 14 \rangle$.
    \item A \textbf{neutral triad} takes the form $\langle x, x + 7, x + 14\rangle$. This type of chord is not present in the 12-tone system. 
\end{enumerate}
    \end{definition}
    \begin{example}
A $C$ major triad has $C$ as a root note. So $x = 0$, and we denote the $C$ major triad as $\langle 0, 8, 14\rangle = \langle C, E, G\rangle$. A $c$ minor triad is denoted by $\langle 0, 6, 14\rangle$.

An $f$ minor triad has $F$ as its root and is denoted $\langle 10, 16, 0\rangle = \langle F, A, C \rangle$. 
    \end{example}
    
Triads can be represented as triangles on the musical clock. For example, Figure \ref{fig:C Major Triad}  shows a $C$ major triad as a triangle on the musical clock.

Because the elements of triads are pitch classes in $\Z_{24}$, $T_n$ and $I_n$ are applied to them component-wise:
$$
T_n(\langle y_1, y_2, y_3 \rangle) = \langle T_n(y_1), T_n(y_2), T_n(y_3) \rangle, 
$$
$$
I_n(\langle y_1, y_2, y_3 \rangle) = \langle I_n(y_1), I_n(y_2), I_n(y_3) \rangle.
$$

\begin{example}
To find the inversion of a $C$ major triad, we apply $I_0$ to $C$ as follows: $I_0(C) = I_0\langle 0, 8, 14 \rangle = \langle I_0 (0), I_0(8), I_0(14) \rangle = \langle 0, 16, 10 \rangle = f$.   Figure \ref{fig: C Major Triad inverted over 0} shows this inversion on the musical clock. 
\end{example}

\begin{figure}[htbp]
\centering
\scalebox{0.6}{
\begin{tikzpicture}
  \def\radius{5}
  \def\nodesize{0.6}
  \def\gapangle{8}
  \def\labeloffset{0.98}
  \def\innertriangleoffset{0.4} 

  \newcommand{\notelabel}[1]{%
    \ifcase#1 $C$%
    \or $C\dsharp$/$D \sflat$%
    \or $C\sharp$/$D \flat$%
    \or $C\ssharp$/$D \dflat$%
    \or $D$%
    \or $D\dsharp$/$E \sflat$%
    \or $D\sharp$/$E \flat$%
    \or $D\ssharp$/$E \dflat$%
    \or $E$%
    \or $E\dsharp$/$F \dflat$%
    \or $F$%
    \or $F\dsharp$/$G \sflat$%
    \or $F\sharp$/$G \flat$%
    \or $F\ssharp$/$G \dflat$%
    \or $G$%
    \or $G\dsharp$/$A \sflat$%
    \or $G\sharp$/$A \flat$%
    \or $G\ssharp$/$A \dflat$%
    \or $A$%
    \or $A\dsharp$/$B \sflat$%
    \or $A\sharp$/$B \flat$%
    \or $A\ssharp$/$B \dflat$%
    \or $B$%
    \or $B\dsharp$/$C \dflat$%
    \fi
  }

  \foreach \i in {0,...,23} {
    \pgfmathsetmacro{\startangle}{90 - 360*\i/24 - \gapangle/2}
    \pgfmathsetmacro{\endangle}{90 - 360*(\i+1)/24 + \gapangle/2}
    \draw[thick, gray]
      ({\radius*cos(\startangle)}, {\radius*sin(\startangle)})
        arc[start angle=\startangle, end angle=\endangle, radius=\radius];
  }

  \foreach \i in {0,...,23} {
    \pgfmathsetmacro{\angle}{90 - 360*\i/24}
    \pgfmathsetmacro{\x}{\radius*cos(\angle)}
    \pgfmathsetmacro{\y}{\radius*sin(\angle)}

    \node[circle, draw, fill=white, minimum size=\nodesize cm]
      at (\x, \y) {\i};

    \pgfmathsetmacro{\lx}{(\radius + \labeloffset)*cos(\angle)}
    \pgfmathsetmacro{\ly}{(\radius + \labeloffset)*sin(\angle)}
    \node[font=\scriptsize, align=center] at (\lx, \ly) {\notelabel{\i}};
  }

  \pgfmathsetmacro{\aone}{90 - 360*0/24}
  \pgfmathsetmacro{\atwo}{90 - 360*8/24}
  \pgfmathsetmacro{\athree}{90 - 360*14/24}

  \pgfmathsetmacro{\rtri}{\radius - \innertriangleoffset}

  \coordinate (A) at ({\rtri*cos(\aone)}, {\rtri*sin(\aone)});
  \coordinate (B) at ({\rtri*cos(\atwo)}, {\rtri*sin(\atwo)});
  \coordinate (C) at ({\rtri*cos(\athree)}, {\rtri*sin(\athree)});

  \draw[ultra thick, black] (A) -- (B) -- (C) -- cycle;

\end{tikzpicture}
}
\caption{$C$ Major triad.}
    \label{fig:C Major Triad}
\end{figure}

\begin{figure}[htbp]
\centering
\scalebox{0.6}{
\begin{tikzpicture}
  \def\radius{5}
  \def\nodesize{0.6}
  \def\gapangle{8}
  \def\labeloffset{0.98}
  \def\innertriangleoffset{0.4}
  \pgfmathsetmacro{\rtri}{\radius - \innertriangleoffset}

  \newcommand{\notelabel}[1]{%
    \ifcase#1 $C$%
    \or $C\dsharp$/$D \sflat$%
    \or $C\sharp$/$D \flat$%
    \or $C\ssharp$/$D \dflat$%
    \or $D$%
    \or $D\dsharp$/$E \sflat$%
    \or $D\sharp$/$E \flat$%
    \or $D\ssharp$/$E \dflat$%
    \or $E$%
    \or $E\dsharp$/$F \dflat$%
    \or $F$%
    \or $F\dsharp$/$G \sflat$%
    \or $F\sharp$/$G \flat$%
    \or $F\ssharp$/$G \dflat$%
    \or $G$%
    \or $G\dsharp$/$A \sflat$%
    \or $G\sharp$/$A \flat$%
    \or $G\ssharp$/$A \dflat$%
    \or $A$%
    \or $A\dsharp$/$B \sflat$%
    \or $A\sharp$/$B \flat$%
    \or $A\ssharp$/$B \dflat$%
    \or $B$%
    \or $B\dsharp$/$C \dflat$%
    \fi
  }

  \foreach \i in {0,...,23} {
    \pgfmathsetmacro{\startangle}{90 - 360*\i/24 - \gapangle/2}
    \pgfmathsetmacro{\endangle}{90 - 360*(\i+1)/24 + \gapangle/2}
    \draw[thick, gray]
      ({\radius*cos(\startangle)}, {\radius*sin(\startangle)})
        arc[start angle=\startangle, end angle=\endangle, radius=\radius];
  }

  \pgfmathsetmacro{\aZero}{90 - 360*0/24}
  \pgfmathsetmacro{\aTwelve}{90 - 360*12/24}
  \def\lineinner{4.3}  
  \def\lineouter{5.7}  
  \coordinate (LStart) at ({\lineinner*cos(\aZero)}, {\lineinner*sin(\aZero)});
  \coordinate (LEnd) at ({\lineouter*cos(\aTwelve)}, {\lineouter*sin(\aTwelve)});
  \draw[dotted, thick] (LStart) -- (LEnd);

  \foreach \i in {0,...,23} {
    \pgfmathsetmacro{\angle}{90 - 360*\i/24}
    \pgfmathsetmacro{\x}{\radius*cos(\angle)}
    \pgfmathsetmacro{\y}{\radius*sin(\angle)}

    \node[circle, draw, fill=white, minimum size=\nodesize cm]
      at (\x, \y) {\i};

    \pgfmathsetmacro{\lx}{(\radius + \labeloffset)*cos(\angle)}
    \pgfmathsetmacro{\ly}{(\radius + \labeloffset)*sin(\angle)}
    \node[font=\scriptsize, align=center] at (\lx, \ly) {\notelabel{\i}};
  }

  \pgfmathsetmacro{\aEight}{90 - 360*8/24}
  \pgfmathsetmacro{\aFourteen}{90 - 360*14/24}
  \coordinate (T1A) at ({\rtri*cos(\aZero)}, {\rtri*sin(\aZero)});
  \coordinate (T1B) at ({\rtri*cos(\aEight)}, {\rtri*sin(\aEight)});
  \coordinate (T1C) at ({\rtri*cos(\aFourteen)}, {\rtri*sin(\aFourteen)});
  \draw[ultra thick, black] (T1A) -- (T1B) -- (T1C) -- cycle;

  \pgfmathsetmacro{\aTen}{90 - 360*10/24}
  \pgfmathsetmacro{\aSixteen}{90 - 360*16/24}
  \coordinate (T2A) at ({\rtri*cos(\aZero)}, {\rtri*sin(\aZero)});
  \coordinate (T2B) at ({\rtri*cos(\aTen)}, {\rtri*sin(\aTen)});
  \coordinate (T2C) at ({\rtri*cos(\aSixteen)}, {\rtri*sin(\aSixteen)});
  \draw[ultra thick, blue] (T2A) -- (T2B) -- (T2C) -- cycle;

\end{tikzpicture}
}
   \caption{$I_0(C) = f$}
 \label{fig: C Major Triad inverted over 0}
\end{figure}

\subsection{Consonant (Major and Minor) Triads}
Major and minor triads are grouped together as \textit{consonant triads} because of their smooth, non-grating sound. We define $S$ as the set of all consonant triads. Each of the 24 pitches acts as the root of both a major and minor triad, so $|S| = 48$. Figure \ref{tab: consonant triads} shows 14 of the consonant triads. A capital letter represents a major triad, and a lower case represents a minor triad. 
\begin{figure}[ht!]
    \centering
    \begin{tabular}{|c|c|}
    \hline
    Major & Minor \\
    \hline
    $C = \langle0, 8, 14\rangle$ & $\langle0, 16, 10\rangle = f$ \\    
    $C\dsharp = D\sflat = \lan1, 9, 15\ran$ & $\lan1, 17, 11\ran = f\dsharp = g\sflat$  \\
    $C \s = D\f   = \lan2, 10, 16\ran$  & $\lan2, 18, 12\ran f\s = g\f$  \\
    $C\ssharp = D\dflat = \lan3, 11, 17\ran$ & $\lan3, 19, 13\ran = f\ssharp = g\dflat$ \\
    $D = \lan4, 12, 18\ran$ & $\lan4, 20, 14\ran = g$ \\
    $D\dsharp = E\sflat = \lan5, 13, 19\ran$ &  $ \lan5, 21, 15\ran = g\dsharp = a\sflat$ \\
    $D\s = E\f = \lan6, 14, 20\ran$ & $ \lan6, 22, 16\ran = g\s = a
    \f$ \\
    \vdots & \vdots \\
    \hline    
    \end{tabular}
    \caption{Consonant Triads}
    \label{tab: consonant triads}
\end{figure}
\begin{proposition}\label{prop:transitive on S}
    The action of the $TI$ group on the set $S$ of consonant triads is simply transitive.
\end{proposition}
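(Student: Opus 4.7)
The plan is to exploit the cardinality coincidence $|TI| = 48 = |S|$: by the orbit--stabilizer theorem, in order to conclude simple transitivity it suffices to verify \emph{either} that the action is transitive \emph{or} that the stabilizer of some (hence every) consonant triad is trivial. I would argue both, since each argument is short and the two together make the statement transparent.

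First I would establish transitivity by showing that every consonant triad lies in the orbit of the $C$ major triad $\langle 0, 8, 14\rangle$. For an arbitrary major triad $\langle x, x+8, x+14\rangle$, the transposition $T_x$ clearly does the job, since $T_n$ acts componentwise as addition by $n$. For an arbitrary minor triad $\langle y, y+6, y+14\rangle$, I would compute
\[
I_{y+14}\bigl(\langle 0, 8, 14\rangle\bigr) = \langle y+14,\ y+6,\ y\rangle,
\]
which as an unordered set is exactly the minor triad with root $y$. Hence every element of $S$ is reachable from $\langle 0,8,14\rangle$ by some element of $TI$, proving transitivity.

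Next I would verify freeness, i.e.\ that the only element of $TI$ fixing a consonant triad is the identity. For a transposition $T_n$ this is immediate: if $T_n$ fixes any single pitch class $x \in \mathbb{Z}_{24}$, then $n \equiv 0 \pmod{24}$. For an inversion $I_n$, I would use a geometric observation: the three pitch classes of any consonant triad partition the musical clock into three arcs of lengths $8$, $6$, and $10$ (in some order), and these three lengths are pairwise distinct. An inversion acts on the clock as a reflection, which necessarily permutes these arcs while preserving their lengths; since the three lengths are distinct the permutation must be trivial, forcing the reflection axis to pass through one of the triad's vertices and the midpoint of the opposite arc. One can then check directly (for say the $C$ major triad) that no such reflection actually sends the triad to itself as an unordered set, because the triangle is scalene and therefore has no axis of symmetry. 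Thus no inversion fixes any consonant triad.

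The main obstacle, though a very mild one, is the inversion case: one must be careful that ``no symmetry axis of the scalene triangle'' translates correctly to the arithmetic statement that no $I_n$ fixes $\langle x, x+8, x+14\rangle$ or $\langle x, x+6, x+14\rangle$ as a set. The cleanest way is to enumerate the $3! = 6$ possible matchings between $\{-x+n, -x-8+n, -x-14+n\}$ and $\{x, x+8, x+14\}$ and observe that each one leads to a contradiction modulo $24$ (typically forcing $8 \equiv 0$, $6 \equiv 0$, or similar). Combining transitivity with freeness yields simple transitivity, which is the assertion of the proposition.
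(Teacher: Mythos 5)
Your proposal is correct, and its transitivity half is essentially the paper's argument: the paper also shows that every consonant triad lies in the orbit of the $C$ major triad, with the transpositions $T_n$ producing all the major triads and the inversions $I_n$ all the minor ones (read off from the table of consonant triads), and then passes from ``everything is reachable from $C$'' to ``any $X$ can be sent to any $Y$.'' Where you genuinely differ is in how the word \emph{simply} is handled: the paper asserts the uniqueness of the group element $h$ with a ``one can readily check,'' implicitly using that the $48$ elements of $TI$ send $C$ to $48$ pairwise distinct triads, whereas you invoke orbit--stabilizer together with the count $|TI| = 48 = |S|$, so that transitivity alone already forces all stabilizers to be trivial. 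That counting step is cleaner and, as you note, makes your separate freeness verification logically redundant. One caveat about that redundant part: for transpositions you argue that if $T_n$ fixes a single pitch class then $n \equiv 0 \pmod{24}$, but an element of the stabilizer only has to fix the triad as an \emph{unordered set}, so a priori it could permute the three notes without fixing any of them; the repair is the same distinct-arc-length observation you use for inversions (a nontrivial rotation stabilizing the triad would cyclically permute the arcs of lengths $8$, $6$, $10$, which is impossible since these are distinct), or the direct check that set-stabilization forces $3n \equiv 0 \pmod{24}$, i.e.\ $n \in \{0, 8, 16\}$, and $n = 8, 16$ visibly fail. Since the proposition already follows from your transitivity argument plus the cardinality count, this slip does not affect the validity of your proof.
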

\begin{proof}
 As in \cite{dihedralgroups}, we can use the table in Figure \ref{tab: consonant triads} to see that the action of the $TI$ group is \textit{simply transitive} on the set $S$. This means that given any two consonant triads $X$ and $Y$, there is a unique $h \in TI \text{ Group}$ such that $hX = Y$.

Observe that an application of $T_n$ to an entry yields the triad $n$ rows below it. 
That is, counting the first row as 0, the $n$-th entry in the first column is 
\begin{align}
    T_n(C) = T_n\lan0, 8, 14\ran &= \lan T_n(0), T_n(8), T_n(14)\ran = \lan n, n+8, n+14 \ran  \notag
\end{align}
or the translation of $C$ up by $n$ quarter-steps. Further, applying $I_0$ to a triad will yield the triad in the same row but in the opposite column. As shown above $I_0(C) = f$. Thus the $n$-th entry of the second column is 
\begin{align}
    I_n\lan0, 8, 14\ran &= \lan I_n(0), I_n(8), I_n(14) \ran.  \notag 
\end{align}
Hence, given two consonant triads $X$ and $Y$, there are elements $h_1, h_2$ in the $TI$ group such that $h_1C = X$ and $h_2C = Y$. Then with $h = h_1h_2^{-1}$, $hX = Y$. One can readily check that this $h$ is unique. Therefore, the action of the $TI$ group is simply transitive. 
\end{proof}

\subsection{Neutral Triads}\label{Sec:neutral triads}
The introduction of 12 new tones creates numerous new intervals and chords that do not exist in the original 12-tone system. One such chord Wyshnedgradsky denotes \textit{neutral triad} and takes the form $\langle x, x + 7, x + 14 \rangle$, introduced in Definition \ref{def:triads}.  
\begin{figure}[ht!]
    \centering
    \begin{tabular}{|c|}
    \hline
    Neutral Triads\\ 
    \hline 
         $C = \langle0, 7, 14\rangle$   \\
         $C\dsharp = D \sflat = \langle 1, 8, 15 \rangle$ \\
         $C\sharp = D\flat = \langle 2, 9, 16 \rangle$ \\
         $C\ssharp = D\dflat = \langle 3, 10, 17 \rangle $ \\
         $D = \langle 4, 11, 18 \rangle $\\
         $D\dsharp = E\sflat = \langle 5, 12, 19 \rangle$ \\
         $D \sharp = E\flat = \langle 6, 13, 20 \rangle $ \\
         $D\ssharp = E\dflat = \langle 7, 14, 21 \rangle $ \\
         $E = \langle 8, 15, 22 \rangle$ \\
         $E \dsharp = F\dflat = \langle 9, 16, 23 \rangle$ \\
         $F = \langle 10, 17, 0 \rangle $\\
         $F\dsharp = G\sflat = \langle 11, 18, 1 \rangle $\\
         \vdots \\
    \hline
    \end{tabular}
    \caption{Neutral Triads }
    \label{Table neutral traids }
\end{figure}
Now define the set $N$ as the set of all neutral triads. Note that $\vert N \vert = 24$ as we can construct a neutral triad with root note any of the 24 notes.   The first 12 neutral triads are listed in Figure 8.  
We can represent neutral triads, their inversions, and transpositions on the musical clock. Figure \ref{fig:neutral C triad} shows a $C$ neutral triad on the clock, and Figure \ref{fig:inverting C neutral} shows the inversion of the $C$ neutral triad. Note that applying any element of the $TI$ group to a neutral triad produces again a neutral triad. In particular, we note the following Lemma.

\begin{lemma}
Any inversion acting on a neutral triad can be expressed as a transposition as
$$I_n \langle x, x+7, x+14 \rangle = T_{-2x+n+10} \langle x, x+7, x+14 \rangle.$$
\end{lemma}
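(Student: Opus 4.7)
The plan is a direct computation on both sides of the claimed identity, exploiting the fact (established just before Definition \ref{def:triads}) that triads are \emph{unordered} sets. Any proof that treats $\langle x, x+7, x+14 \rangle$ as an ordered tuple will fail, since the entries will not match position-by-position; the equality only holds as sets of pitch classes modulo $24$.

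First I would expand the left-hand side componentwise using the formula $I_n(y) = -y + n \bmod 24$ from Definition \ref{Def:transposition and inversion formulas}:
\begin{equation*}
I_n \langle x, x+7, x+14 \rangle = \langle -x+n,\ -x+n-7,\ -x+n-14 \rangle.
\end{equation*}
Since the entries live in $\Z_{24}$, I would rewrite $-7 \equiv 17$ and $-14 \equiv 10 \pmod{24}$, so that as an unordered triad this set equals
\begin{equation*}
\{-x+n,\ -x+n+10,\ -x+n+17\}.
\end{equation*}

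Next I would expand the right-hand side using $T_m(y) = y + m \bmod 24$ with $m = -2x+n+10$:
\begin{equation*}
T_{-2x+n+10}\langle x, x+7, x+14 \rangle = \langle -x+n+10,\ -x+n+17,\ -x+n+24 \rangle,
\end{equation*}
and since $-x+n+24 \equiv -x+n \pmod{24}$, this is the same unordered set as above. Matching the two sets completes the proof.

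The only real obstacle is keeping track of the unordered-set interpretation: the root of $I_n\langle x,x+7,x+14\rangle$ (viewed as the smallest entry mod $24$) is $-x+n-14$, whereas the root of the transposed triad is $-x+n+10$; these coincide mod $24$, which is why the chosen transposition amount is $-2x+n+10$ rather than something naively like $-2x+n$. I would emphasize this point in the write-up so the reader sees that the inversion $I_n$ acts on a neutral triad by sending its root $x$ to the new root $-x+n+10$, and then the identity $T_{-2x+n+10}$ is exactly the transposition carrying the original root to this new root.
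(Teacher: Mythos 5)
Your proposal is correct and follows essentially the same route as the paper's proof: a direct componentwise computation of $I_n$ on the triad, a reduction of $-7$ and $-14$ modulo $24$, and an identification of the result with $T_{-2x+n+10}\langle x, x+7, x+14\rangle$ as an unordered triad (the paper silently reorders the entries at exactly the point you flag). Your explicit remark that the identity only holds after reordering, with the inverted root being $-x+n+10$, is a fair clarification but not a different argument.
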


\begin{proof}
    The proof follows from an application of Definition \ref{Def:transposition and inversion formulas}.
    \begin{align*}
I_n \langle x, x+7, x+14 \rangle &= \langle -x+n, -x-7+n, -x-14+n \rangle \\
&= \langle -x-14 +n, -x-7+n, -x+n \rangle \\
&= \langle -x+n+10, -x+n+17, -x+n \rangle \\
&= T_{-2x+n+10} \langle x, x+7, x+14 \rangle. \qedhere 
    \end{align*} 
\end{proof}

\begin{remark}
Note that while the action of $TI$ on the set of neutral triads $N$ is transitive, it is not simply transitive like the action of $TI$ on the set of consonant triads $S$.
\end{remark}


\begin{figure}[htbp]
\centering
\scalebox{0.6}{
\begin{tikzpicture}
  \def\radius{5}
  \def\nodesize{0.6}
  \def\gapangle{8}
  \def\labeloffset{0.98}
  \def\innertriangleoffset{0.4}
  \pgfmathsetmacro{\rtri}{\radius - \innertriangleoffset}

  \newcommand{\notelabel}[1]{%
    \ifcase#1 $C$%
    \or $C\dsharp$/$D \sflat$%
    \or $C\sharp$/$D \flat$%
    \or $C\ssharp$/$D \dflat$%
    \or $D$%
    \or $D\dsharp$/$E \sflat$%
    \or $D\sharp$/$E \flat$%
    \or $D\ssharp$/$E \dflat$%
    \or $E$%
    \or $E\dsharp$/$F \dflat$%
    \or $F$%
    \or $F\dsharp$/$G \sflat$%
    \or $F\sharp$/$G \flat$%
    \or $F\ssharp$/$G \dflat$%
    \or $G$%
    \or $G\dsharp$/$A \sflat$%
    \or $G\sharp$/$A \flat$%
    \or $G\ssharp$/$A \dflat$%
    \or $A$%
    \or $A\dsharp$/$B \sflat$%
    \or $A\sharp$/$B \flat$%
    \or $A\ssharp$/$B \dflat$%
    \or $B$%
    \or $B\dsharp$/$C \dflat$%
    \fi
  }

  \foreach \i in {0,...,23} {
    \pgfmathsetmacro{\startangle}{90 - 360*\i/24 - \gapangle/2}
    \pgfmathsetmacro{\endangle}{90 - 360*(\i+1)/24 + \gapangle/2}
    \draw[thick, gray]
      ({\radius*cos(\startangle)}, {\radius*sin(\startangle)})
        arc[start angle=\startangle, end angle=\endangle, radius=\radius];
  }

  \foreach \i in {0,...,23} {
    \pgfmathsetmacro{\angle}{90 - 360*\i/24}
    \pgfmathsetmacro{\x}{\radius*cos(\angle)}
    \pgfmathsetmacro{\y}{\radius*sin(\angle)}

    \node[circle, draw, fill=white, minimum size=\nodesize cm]
      at (\x, \y) {\i};

    \pgfmathsetmacro{\lx}{(\radius + \labeloffset)*cos(\angle)}
    \pgfmathsetmacro{\ly}{(\radius + \labeloffset)*sin(\angle)}
    \node[font=\scriptsize, align=center] at (\lx, \ly) {\notelabel{\i}};
  }

  \pgfmathsetmacro{\aZero}{90 - 360*0/24}
  \pgfmathsetmacro{\aSeven}{90 - 360*7/24}
  \pgfmathsetmacro{\aFourteen}{90 - 360*14/24}

  \def\innertriangleoffset{0.4}
  \pgfmathsetmacro{\rtri}{\radius - \innertriangleoffset}

  \coordinate (T0) at ({\rtri*cos(\aZero)}, {\rtri*sin(\aZero)});
  \coordinate (T7) at ({\rtri*cos(\aSeven)}, {\rtri*sin(\aSeven)});
  \coordinate (T14) at ({\rtri*cos(\aFourteen)}, {\rtri*sin(\aFourteen)});
  \draw[thick, black] (T0) -- (T7) -- (T14) -- cycle;

\end{tikzpicture}
}
 \caption{$C_n$ chord, i.e. a neutral chord with root note $C$. }
 \label{fig:neutral C triad}
\end{figure}


\begin{figure}[htbp]
\centering
\scalebox{0.6}{
\begin{tikzpicture}
  \def\radius{5}
  \def\nodesize{0.6}
  \def\gapangle{8}
  \def\labeloffset{0.98}
  \def\innertriangleoffset{0.4}
  \pgfmathsetmacro{\rtri}{\radius - \innertriangleoffset}

  \newcommand{\notelabel}[1]{%
    \ifcase#1 $C$%
    \or $C\dsharp$/$D \sflat$%
    \or $C\sharp$/$D \flat$%
    \or $C\ssharp$/$D \dflat$%
    \or $D$%
    \or $D\dsharp$/$E \sflat$%
    \or $D\sharp$/$E \flat$%
    \or $D\ssharp$/$E \dflat$%
    \or $E$%
    \or $E\dsharp$/$F \dflat$%
    \or $F$%
    \or $F\dsharp$/$G \sflat$%
    \or $F\sharp$/$G \flat$%
    \or $F\ssharp$/$G \dflat$%
    \or $G$%
    \or $G\dsharp$/$A \sflat$%
    \or $G\sharp$/$A \flat$%
    \or $G\ssharp$/$A \dflat$%
    \or $A$%
    \or $A\dsharp$/$B \sflat$%
    \or $A\sharp$/$B \flat$%
    \or $A\ssharp$/$B \dflat$%
    \or $B$%
    \or $B\dsharp$/$C \dflat$%
    \fi
  }

  \foreach \i in {0,...,23} {
    \pgfmathsetmacro{\startangle}{90 - 360*\i/24 - \gapangle/2}
    \pgfmathsetmacro{\endangle}{90 - 360*(\i+1)/24 + \gapangle/2}
    \draw[thick, gray]
      ({\radius*cos(\startangle)}, {\radius*sin(\startangle)})
        arc[start angle=\startangle, end angle=\endangle, radius=\radius];
  }

  \pgfmathsetmacro{\aZero}{90 - 360*0/24}
  \pgfmathsetmacro{\aTwelve}{90 - 360*12/24}
  \def\lineinner{4.3}  
  \def\lineouter{5.7}  
  \coordinate (LStart) at ({\lineinner*cos(\aZero)}, {\lineinner*sin(\aZero)});
  \coordinate (LEnd) at ({\lineouter*cos(\aTwelve)}, {\lineouter*sin(\aTwelve)});
  \draw[dotted, thick] (LStart) -- (LEnd);

  \foreach \i in {0,...,23} {
    \pgfmathsetmacro{\angle}{90 - 360*\i/24}
    \pgfmathsetmacro{\x}{\radius*cos(\angle)}
    \pgfmathsetmacro{\y}{\radius*sin(\angle)}

    \node[circle, draw, fill=white, minimum size=\nodesize cm]
      at (\x, \y) {\i};

    \pgfmathsetmacro{\lx}{(\radius + \labeloffset)*cos(\angle)}
    \pgfmathsetmacro{\ly}{(\radius + \labeloffset)*sin(\angle)}
    \node[font=\scriptsize, align=center] at (\lx, \ly) {\notelabel{\i}};
  }

  \pgfmathsetmacro{\aSeven}{90 - 360*7/24}
  \pgfmathsetmacro{\aFourteen}{90 - 360*14/24}
  \coordinate (T1A) at ({\rtri*cos(\aZero)}, {\rtri*sin(\aZero)});
  \coordinate (T1B) at ({\rtri*cos(\aSeven)}, {\rtri*sin(\aSeven)});
  \coordinate (T1C) at ({\rtri*cos(\aFourteen)}, {\rtri*sin(\aFourteen)});
  \draw[ultra thick, black] (T1A) -- (T1B) -- (T1C) -- cycle;

  \pgfmathsetmacro{\aTen}{90 - 360*10/24}
  \pgfmathsetmacro{\aSeventeen}{90 - 360*17/24}
  \coordinate (T2A) at ({\rtri*cos(\aZero)}, {\rtri*sin(\aZero)});
  \coordinate (T2B) at ({\rtri*cos(\aTen)}, {\rtri*sin(\aTen)});
  \coordinate (T2C) at ({\rtri*cos(\aSeventeen)}, {\rtri*sin(\aSeventeen)});
  \draw[ultra thick, blue] (T2A) -- (T2B) -- (T2C) -- cycle;

\end{tikzpicture}
}
   \caption{$I_0(C_n) = T_{10}(C_n) = F_n$.}
 \label{fig:inverting C neutral}
\end{figure}

\section{The $PLR$-Groups}
So far, we have studied the action of the $TI$ group found in a quarter-tone system on the set $S$ of consonant triads and on the set $N$  of neutral triads.
In comparison to the original 12-tone system, the $TI$ group with 24 tones behaves as expected; in the 12-tone system, there exists a group isomorphic to $D_{12}$, and in the 24-tone system, there exists a group isomorphic to $D_{24}$. 
Moreover, in both cases, the $TI$ group acts on the set $S$ in a very similar manner. 
As shown in \cite{dihedralgroups}, there is a second dihedral group found in the consonant triads of the original 12-tone system.
This group is called the $PLR$-group, and is constructed using \textit{Parallel}, \textit{Leading Tone Exchange}, and \textit{Relative} functions, which will be further discussed in this section. 
One might expect that when we apply these functions to a quarter-tone system, we find another group isomorphic to $D_{24}$. Instead, there are two groups isomorphic to $D_{12}$. We will show this below with the proof of Theorem \ref{Thm:two groups}.

First, we define the $PLR$ functions. This definition is similar to the one in \cite{dihedralgroups}, except for the fact that the subindices are taken modulo 24. 
\begin{definition}\label{Def:PLR} The functions $P$ (Parallel), $L$ (Leading Tone Exchange), and $R$ (Relative) are functions from the set $S$ of all consonant (major and minor) triads to itself and are defined as follows:
    \begin{align*}
    P,L,R: S &\to S \notag \\
    P\langle y_1, y_2, y_3\rangle &= I_{y_1 + y_3}\langle y_1, y_2, y_3 \rangle, \\
    L\langle y_1, y_2, y_3\rangle &= I_{y_2 + y_3}\langle y_1, y_2, y_3 \rangle,\\
    R\langle y_1, y_2, y_3\rangle &= I_{y_1 + y_2}\langle y_1, y_2, y_3 \rangle.
\end{align*}
\end{definition}
Note that each of these functions is a bijection and can be represented by an inversion on the musical clock. However, unlike the Inversion function defined above, $P,L,$ and $R$ do not act component-wise; the axis of inversion is dependent on the input triad. 
\begin{example}
In this example we apply the $P$, $L$, and $R$ functions on the $C$-major triad:
\begin{align*}
P(C) = P\langle 0, 8, 14\rangle &= I_{14}\langle 0, 8, 14 \rangle \notag \\
&= \langle I_{14}(0), I_{14}(8), I_{14}(14) \rangle \notag \\
&= \langle 14, 6, 0 \rangle = c. \notag
\end{align*}
\begin{align*}
L(C) = L\langle 0, 8, 14\rangle &= I_{22}\langle 0, 8, 14 \rangle \notag \\
&= \langle I_{22}(0), I_{22}(8), I_{22}(14) \rangle \notag \\
&= \langle 22, 14, 8 \rangle = e. \notag
\end{align*}
\begin{align*}
R(C) = L\langle 0, 8, 14\rangle &= I_{8}\langle 0, 8, 14 \rangle \notag \\
&= \langle I_{8}(0), I_{8}(8), I_{8}(14) \rangle \notag \\
&= \langle 8, 0, 18 \rangle = a. \notag
\end{align*}
These three actions on $C$ are represented geometrically on the musical clock in Figures \ref{fig:Parallel on C}, \ref{fig:Leading tone on C}, and \ref{fig:Relative on C}.  
\end{example}

\begin{remark}
    Observing these figures, one might notice a couple of patterns. The first is that each of these functions, when applied to a major triad, yields a minor triad. When the input is a minor triad, the output is a major triad. Finally, when the input is a neutral triad, the output is also a neutral triad. 

A second pattern is that in each example, the two triangles on the musical clock share a side. This is not an accident. In music, it sounds interesting to move from one triad to another if they have notes in common, and the $P,L,R$ functions reflect this.
\end{remark}

More specifically, each of these functions fixes two of the components of a triad and moves the third component to create a new triad. For example, the Parallel function switches the first and third components. Above we saw that $P(C) = c$, which shares the first and third components. Further, $P(c) = C$.

This shows us that $P$ is its own inverse when applied to a consonant triad: because the parallel function fixes the first and third component, the axis of inversion does not change whether $C$ or $c$ is inputted. 
We can see this algebraically as well. 
Say $X \in S$ is a major triad. So $X = \langle x, x +8, x +14 \rangle $ for some pitch class $x$. Then we have
\begin{align}
    P\langle x, x+8, x+14\rangle &= I_{2x + 14}\langle x, x+8, x+14\rangle \notag \\
    &= \langle x + 2x + 14, -x-8+2x+14, -x-14+2x+14\rangle \notag \\
    &= \langle x+14, x+6, x\rangle, \notag 
\end{align}
and 
\begin{align}
    P\langle x + 14, x + 6, x \rangle &= I_{2x + 14} \langle x + 14, x + 6, x \rangle \notag \\
    &= \langle x, x + 8, x + 14\rangle. \notag
\end{align}
Hence, $P$ is its own inverse. A similar argument can be used to show $R$ and $L$ are their own inverses when applied to consonant triads.

When applied to neutral triads, the function $P$ acts as the identity, and the $L$ and $R$ functions can be written as translations.

The qualities we have just discussed of the $PLR$ functions on the set of consonant triads are consistent with the $PLR$ functions in the 12-tone system. However, before we formally define the $PLR$ groups, we first must discuss a quality that is unique to the 24-tone system.



\begin{figure}[htbp]
\centering
\scalebox{0.6}{
\begin{tikzpicture}
  \def\radius{5}
  \def\nodesize{0.6}
  \def\gapangle{8}
  \def\labeloffset{0.98}
  \def\innertriangleoffset{0.4}
  \pgfmathsetmacro{\rtri}{\radius - \innertriangleoffset}

  \newcommand{\notelabel}[1]{%
    \ifcase#1 $C$%
    \or $C\dsharp$/$D \sflat$%
    \or $C\sharp$/$D \flat$%
    \or $C\ssharp$/$D \dflat$%
    \or $D$%
    \or $D\dsharp$/$E \sflat$%
    \or $D\sharp$/$E \flat$%
    \or $D\ssharp$/$E \dflat$%
    \or $E$%
    \or $E\dsharp$/$F \dflat$%
    \or $F$%
    \or $F\dsharp$/$G \sflat$%
    \or $F\sharp$/$G \flat$%
    \or $F\ssharp$/$G \dflat$%
    \or $G$%
    \or $G\dsharp$/$A \sflat$%
    \or $G\sharp$/$A \flat$%
    \or $G\ssharp$/$A \dflat$%
    \or $A$%
    \or $A\dsharp$/$B \sflat$%
    \or $A\sharp$/$B \flat$%
    \or $A\ssharp$/$B \dflat$%
    \or $B$%
    \or $B\dsharp$/$C \dflat$%
    \fi
  }

  \foreach \i in {0,...,23} {
    \pgfmathsetmacro{\startangle}{90 - 360*\i/24 - \gapangle/2}
    \pgfmathsetmacro{\endangle}{90 - 360*(\i+1)/24 + \gapangle/2}
    \draw[thick, gray]
      ({\radius*cos(\startangle)}, {\radius*sin(\startangle)})
        arc[start angle=\startangle, end angle=\endangle, radius=\radius];
  }

  \pgfmathsetmacro{\aSeven}{90 - 360*7/24}
  \pgfmathsetmacro{\aNineteen}{90 - 360*19/24}
  \def\lineinner{4.7}  
  \def\lineouter{6.0}  

  \coordinate (LStart) at ({\lineinner*cos(\aSeven)}, {\lineinner*sin(\aSeven)});
  \coordinate (LEnd) at ({\lineouter*cos(\aNineteen)}, {\lineouter*sin(\aNineteen)});
  \draw[dotted, thick] (LStart) -- (LEnd);

  \foreach \i in {0,...,23} {
    \pgfmathsetmacro{\angle}{90 - 360*\i/24}
    \pgfmathsetmacro{\x}{\radius*cos(\angle)}
    \pgfmathsetmacro{\y}{\radius*sin(\angle)}

    \node[circle, draw, fill=white, minimum size=\nodesize cm]
      at (\x, \y) {\i};

    \pgfmathsetmacro{\lx}{(\radius + \labeloffset)*cos(\angle)}
    \pgfmathsetmacro{\ly}{(\radius + \labeloffset)*sin(\angle)}
    \node[font=\scriptsize, align=center] at (\lx, \ly) {\notelabel{\i}};
  }

  \pgfmathsetmacro{\aZero}{90 - 360*0/24}
  \pgfmathsetmacro{\aEight}{90 - 360*8/24}
  \pgfmathsetmacro{\aSix}{90 - 360*6/24}
  \pgfmathsetmacro{\aFourteen}{90 - 360*14/24}

  \coordinate (B0) at ({\rtri*cos(\aZero)}, {\rtri*sin(\aZero)});
  \coordinate (B8) at ({\rtri*cos(\aEight)}, {\rtri*sin(\aEight)});
  \coordinate (B14) at ({\rtri*cos(\aFourteen)}, {\rtri*sin(\aFourteen)});

  \pgfmathsetmacro{\aSix}{90 - 360*6/24}
  \coordinate (Bl0) at ({\rtri*cos(\aZero)}, {\rtri*sin(\aZero)});
  \coordinate (Bl6) at ({\rtri*cos(\aSix)}, {\rtri*sin(\aSix)});
  \coordinate (Bl14) at ({\rtri*cos(\aFourteen)}, {\rtri*sin(\aFourteen)});

  \draw[ultra thick, black] (B0) -- (B8) -- (B14) -- cycle;

  \draw[ultra thick, blue] (Bl0) -- (Bl6) -- (Bl14) -- cycle;

  \draw[line width=2pt, black] (B0) -- (B14);
  \draw[line width=1pt, blue] (B0) -- (B14);

\end{tikzpicture}
}
    \caption{$P(C) = c$.}
    \label{fig:Parallel on C}
\end{figure}


\begin{figure}[htbp]
\centering
\scalebox{0.6}{
\begin{tikzpicture}
  \def\radius{5}
  \def\nodesize{0.6}
  \def\gapangle{8}
  \def\labeloffset{0.98}
  \def\innertriangleoffset{0.4}
  \pgfmathsetmacro{\rtri}{\radius - \innertriangleoffset}

  \newcommand{\notelabel}[1]{%
    \ifcase#1 $C$%
    \or $C\dsharp$/$D \sflat$%
    \or $C\sharp$/$D \flat$%
    \or $C\ssharp$/$D \dflat$%
    \or $D$%
    \or $D\dsharp$/$E \sflat$%
    \or $D\sharp$/$E \flat$%
    \or $D\ssharp$/$E \dflat$%
    \or $E$%
    \or $E\dsharp$/$F \dflat$%
    \or $F$%
    \or $F\dsharp$/$G \sflat$%
    \or $F\sharp$/$G \flat$%
    \or $F\ssharp$/$G \dflat$%
    \or $G$%
    \or $G\dsharp$/$A \sflat$%
    \or $G\sharp$/$A \flat$%
    \or $G\ssharp$/$A \dflat$%
    \or $A$%
    \or $A\dsharp$/$B \sflat$%
    \or $A\sharp$/$B \flat$%
    \or $A\ssharp$/$B \dflat$%
    \or $B$%
    \or $B\dsharp$/$C \dflat$%
    \fi
  }

  \foreach \i in {0,...,23} {
    \pgfmathsetmacro{\startangle}{90 - 360*\i/24 - \gapangle/2}
    \pgfmathsetmacro{\endangle}{90 - 360*(\i+1)/24 + \gapangle/2}
    \draw[thick, gray]
      ({\radius*cos(\startangle)}, {\radius*sin(\startangle)})
        arc[start angle=\startangle, end angle=\endangle, radius=\radius];
  }

  \pgfmathsetmacro{\aEleven}{90 - 360*11/24}
  \pgfmathsetmacro{\aTwentyThree}{90 - 360*23/24}
  \def\lineinner{4.7}
  \def\lineouter{5.7}

  \coordinate (LStart) at ({\lineinner*cos(\aEleven)}, {\lineinner*sin(\aEleven)});
  \coordinate (LEnd) at ({\lineouter*cos(\aTwentyThree)}, {\lineouter*sin(\aTwentyThree)});
  \draw[dotted, thick] (LStart) -- (LEnd);

  \foreach \i in {0,...,23} {
    \pgfmathsetmacro{\angle}{90 - 360*\i/24}
    \pgfmathsetmacro{\x}{\radius*cos(\angle)}
    \pgfmathsetmacro{\y}{\radius*sin(\angle)}

    \node[circle, draw, fill=white, minimum size=\nodesize cm]
      at (\x, \y) {\i};

    \pgfmathsetmacro{\lx}{(\radius + \labeloffset)*cos(\angle)}
    \pgfmathsetmacro{\ly}{(\radius + \labeloffset)*sin(\angle)}
    \node[font=\scriptsize, align=center] at (\lx, \ly) {\notelabel{\i}};
  }

  \pgfmathsetmacro{\aZero}{90 - 360*0/24}
  \pgfmathsetmacro{\aEight}{90 - 360*8/24}
  \pgfmathsetmacro{\aFourteen}{90 - 360*14/24}
  \pgfmathsetmacro{\aTwentyTwo}{90 - 360*22/24}

  \coordinate (B0) at ({\rtri*cos(\aZero)}, {\rtri*sin(\aZero)});
  \coordinate (B8) at ({\rtri*cos(\aEight)}, {\rtri*sin(\aEight)});
  \coordinate (B14) at ({\rtri*cos(\aFourteen)}, {\rtri*sin(\aFourteen)});
  \draw[ultra thick, black] (B0) -- (B8) -- (B14) -- cycle;

  \coordinate (T8) at ({\rtri*cos(\aEight)}, {\rtri*sin(\aEight)});
  \coordinate (T14) at ({\rtri*cos(\aFourteen)}, {\rtri*sin(\aFourteen)});
  \coordinate (T22) at ({\rtri*cos(\aTwentyTwo)}, {\rtri*sin(\aTwentyTwo)});
  \draw[ultra thick, blue] (T8) -- (T14) -- (T22) -- cycle;

  \draw[line width=2pt, black] (B8) -- (B14);
  \draw[line width=1pt, blue] (B8) -- (B14);

\end{tikzpicture}
}
    \caption{$L(C) = e$.}
    \label{fig:Leading tone on C}
\end{figure}

\begin{figure}[htbp]
\centering
\scalebox{0.6}{
\begin{tikzpicture}
  \def\radius{5}
  \def\nodesize{0.6}
  \def\gapangle{8}
  \def\labeloffset{0.98}
  \def\innertriangleoffset{0.4}
  \pgfmathsetmacro{\rtri}{\radius - \innertriangleoffset}

  \newcommand{\notelabel}[1]{%
    \ifcase#1 $C$%
    \or $C\dsharp$/$D \sflat$%
    \or $C\sharp$/$D \flat$%
    \or $C\ssharp$/$D \dflat$%
    \or $D$%
    \or $D\dsharp$/$E \sflat$%
    \or $D\sharp$/$E \flat$%
    \or $D\ssharp$/$E \dflat$%
    \or $E$%
    \or $E\dsharp$/$F \dflat$%
    \or $F$%
    \or $F\dsharp$/$G \sflat$%
    \or $F\sharp$/$G \flat$%
    \or $F\ssharp$/$G \dflat$%
    \or $G$%
    \or $G\dsharp$/$A \sflat$%
    \or $G\sharp$/$A \flat$%
    \or $G\ssharp$/$A \dflat$%
    \or $A$%
    \or $A\dsharp$/$B \sflat$%
    \or $A\sharp$/$B \flat$%
    \or $A\ssharp$/$B \dflat$%
    \or $B$%
    \or $B\dsharp$/$C \dflat$%
    \fi
  }

  \foreach \i in {0,...,23} {
    \pgfmathsetmacro{\startangle}{90 - 360*\i/24 - \gapangle/2}
    \pgfmathsetmacro{\endangle}{90 - 360*(\i+1)/24 + \gapangle/2}
    \draw[thick, gray]
      ({\radius*cos(\startangle)}, {\radius*sin(\startangle)})
        arc[start angle=\startangle, end angle=\endangle, radius=\radius];
  }

  \pgfmathsetmacro{\aFour}{90 - 360*4/24}
  \pgfmathsetmacro{\aSixteen}{90 - 360*16/24}
  \def\lineinner{4.7}
  \def\lineouter{5.6}
  \coordinate (LStart) at ({\lineinner*cos(\aFour)}, {\lineinner*sin(\aFour)});
  \coordinate (LEnd) at ({\lineouter*cos(\aSixteen)}, {\lineouter*sin(\aSixteen)});
  \draw[dotted, thick] (LStart) -- (LEnd);

  \foreach \i in {0,...,23} {
    \pgfmathsetmacro{\angle}{90 - 360*\i/24}
    \pgfmathsetmacro{\x}{\radius*cos(\angle)}
    \pgfmathsetmacro{\y}{\radius*sin(\angle)}

    \node[circle, draw, fill=white, minimum size=\nodesize cm]
      at (\x, \y) {\i};

    \pgfmathsetmacro{\lx}{(\radius + \labeloffset)*cos(\angle)}
    \pgfmathsetmacro{\ly}{(\radius + \labeloffset)*sin(\angle)}
    \node[font=\scriptsize, align=center] at (\lx, \ly) {\notelabel{\i}};
  }

  \pgfmathsetmacro{\aZero}{90 - 360*0/24}
  \pgfmathsetmacro{\aEight}{90 - 360*8/24}
  \pgfmathsetmacro{\aFourteen}{90 - 360*14/24}
  \pgfmathsetmacro{\aEighteen}{90 - 360*18/24}

  \coordinate (N0) at ({\rtri*cos(\aZero)}, {\rtri*sin(\aZero)});
  \coordinate (N8) at ({\rtri*cos(\aEight)}, {\rtri*sin(\aEight)});
  \coordinate (N14) at ({\rtri*cos(\aFourteen)}, {\rtri*sin(\aFourteen)});
  \coordinate (N18) at ({\rtri*cos(\aEighteen)}, {\rtri*sin(\aEighteen)});

  \draw[ultra thick, black] (N0) -- (N8) -- (N14) -- cycle;

  \draw[ultra thick, blue] (N0) -- (N8) -- (N18) -- cycle;

  \draw[line width=2pt, black] (N0) -- (N8);
  \draw[line width=1pt, blue] (N0) -- (N8);

\end{tikzpicture}
}
   \caption{$R(C) = a$.}
    \label{fig:Relative on C}
\end{figure}

\begin{lemma}\label{Lemma:original tone under PLR}
    Let $X \in S$ be triad with root $x$ where $x$ is an original tone. Then $P(X), L(X),$ and $R(X)$ are triads with original tones as roots. 
\end{lemma}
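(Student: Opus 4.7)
The plan is to reduce the claim to a parity argument in $\Z_{24}$. Observing the musical clock in Figure \ref{fig: clock}, the twelve original tones $C, C\sharp, D, \dots, B$ sit precisely at the even residues $0, 2, 4, \dots, 22$, while the twelve new tones occupy the odd residues. Thus the lemma reduces to showing that if every component of $X$ is even, then every component of $P(X)$, $L(X)$, and $R(X)$ is again even.

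First I would note that the hypothesis that the root is original (i.e.\ even) automatically forces every component of $X$ to be even. Indeed, in both the major form $\langle x, x+8, x+14\rangle$ and the minor form $\langle x, x+6, x+14\rangle$ of Definition \ref{def:triads}, the offsets from the root are all even, so evenness of $x$ propagates to the other two entries.

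The main step is to verify that each of $P$, $L$, $R$ preserves the property of having all components even. By Definition \ref{Def:PLR}, each of these three functions acts on $X$ as an inversion $I_n$ whose subscript $n = y_i + y_j$ is the sum of two components of $X$. Since those components are even, so is $n$; and $I_n(y) = -y + n \bmod 24$ clearly preserves parity whenever $n$ is even. Hence $P(X)$, $L(X)$, and $R(X)$ again have all three components even.

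Finally, since the root of any consonant triad is by definition one of its components, and every component of the output triad is even, the roots of $P(X)$, $L(X)$, and $R(X)$ are all original tones. I do not expect any real obstacle in this argument: once the identification of original tones with the even residues mod $24$ is made, the whole proof is the observation that the defining intervals $6$, $8$, $14$ of consonant triads — and therefore every inversion index arising in the definition of $P$, $L$, $R$ — are even.
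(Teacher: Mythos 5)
Your proof is correct and rests on the same key observation as the paper's: original tones are exactly the even residues mod $24$, and the consonant-triad intervals $6$, $8$, $14$ are even, so everything reduces to parity. The paper instead computes $P(X)$ explicitly for the major and minor cases and dismisses $L$ and $R$ with ``a similar argument,'' whereas your version treats $P$, $L$, $R$ uniformly by noting the inversion index $y_i+y_j$ is even and $I_n$ preserves parity for even $n$ --- a slightly cleaner packaging of the same idea.
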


\begin{proof}
First, observe on the musical clock that each original tone is attributed with an even number. 
Let $X \in S$ be a triad with root $x$, which is an original tone. 
Then $X$ is either major or minor. 
Assume $X$ is major. Then $X$ takes the form $\langle x, x + 8, x + 14 \rangle$. 
As shown above, $P \langle x, x + 8, x + 14 \rangle = \langle x + 14, x + 6, x \rangle$. Since $x$ is even, the minor triad $\langle x + 14, x + 6, x \rangle$ also has an original tone as a root.  
Now assume $X \in S$ is minor. Then $X$ takes the form $\langle x, x + 6, x +14 \rangle$ where $x$ is even. 
Then $P\langle x, x + 6, x + 14 \rangle = \langle x + 14, x + 8, x  \rangle$, which is a major triad with an original tone root. 
A similar argument can be used to show that $L(X)$ and $R(X)$ yield triads with original tones as roots.
\end{proof}

\begin{lemma}
Let $Y \in S$ be a triad with root $y$ where $y$ is a new tone. Then $P(Y), L(Y)$, and $R(Y)$ are triads with new tones as roots. 
\end{lemma}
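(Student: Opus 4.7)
The plan is to mirror the proof of Lemma \ref{Lemma:original tone under PLR} verbatim, replacing ``even'' with ``odd'' throughout. The key observation is that on the musical clock, new tones correspond exactly to the odd residues modulo $24$, since original tones are labelled by the even residues. Thus the lemma reduces to checking that each of $P$, $L$, and $R$ preserves the parity of the root, both on major and on minor inputs.

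Concretely, I would first suppose $Y$ is major with root $y$, so $Y = \langle y, y+8, y+14 \rangle$ with $y$ odd. The expansion $P\langle y, y+8, y+14 \rangle = \langle y+14, y+6, y \rangle$ carried out in the preceding discussion shows that $P(Y)$, viewed as an unordered set $\{y, y+6, y+14\}$, is the minor triad with root $y$ itself, which is odd and hence a new tone. The cases of $L$ and $R$ are analogous: direct computations yield the sets $\{y+8, y+14, y+22\}$ and $\{y, y+8, y+18\}$, which in canonical minor form $\langle r, r+6, r+14 \rangle$ have roots $y+8$ and $y+18$, respectively. Each differs from $y$ by an even integer modulo $24$, so the root remains odd. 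When instead $Y = \langle y, y+6, y+14 \rangle$ is minor with $y$ odd, the three analogous expansions produce major triads whose roots again differ from $y$ by an even constant modulo $24$, hence are odd as well.

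The only obstacle is bookkeeping: in each of the six cases (three functions times two triad types) one must reorder the output set into its canonical $\langle r, r+8, r+14 \rangle$ or $\langle r, r+6, r+14 \rangle$ form in order to read off the root $r$ and verify that $r - y$ is even modulo $24$. Because every constant appearing in the definitions of major, minor, $P$, $L$, and $R$ is even, this parity preservation is automatic, and no new ideas beyond those of Lemma \ref{Lemma:original tone under PLR} are required.
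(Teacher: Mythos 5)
Your proposal is correct and follows essentially the same route as the paper's own proof: identify the new tones with the odd residues mod $24$, compute the images under $P$, $L$, $R$ in canonical major/minor form, and observe that the new root differs from $y$ by an even constant, so it stays odd. You carry out the $L$ and $R$ computations explicitly (the paper only does $P$ and says the others are similar), but the underlying argument is identical.
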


\begin{proof}
    Observe that each of the new tones on the musical clock corresponds to an odd number. 
    Let $Y \in S$ be a triad with root $y$ where $y$ is a new tone. Then $Y$ is either major or minor. 
    Assume $Y$ is major. Then, similarly to above, $P(Y) = \langle y + 14, y + 6 , y\rangle$, which is a minor triad with a new tone as a root.
    Now, assume $Y$ is minor. Then $P(Y) = \langle y + 14, y +8, y \rangle$, which is a major triad with a new tone as a root. 
    A similar argument can be used to show $L(Y)$ and $R(Y)$ are triads with new tones as roots. 
\end{proof}
\begin{lemma}
    Let $Z \in N$ be a neutral triad with root $x$. The function $P(Z)$ fixes the triad, and both $R(Z)$ and $L(Z)$ switch the nature of the root tone from original tone to new tone and vice versa.
    \begin{proof}
        This follows from a direct application of Definitions \ref{Def:PLR} and \ref{Def:transposition and inversion formulas}.
    \end{proof}
\end{lemma}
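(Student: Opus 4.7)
The plan is to directly substitute the canonical form $Z = \langle x, x+7, x+14 \rangle$ into each of the three inversion formulas from Definition \ref{Def:PLR} and simplify, then read off the new root and its parity.

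For the parallel function, I would compute $P(Z) = I_{y_1+y_3}(Z) = I_{2x+14}(Z)$. Applying $I_{2x+14}$ component-wise to $\langle x, x+7, x+14\rangle$ gives $\langle x+14,\, x+7,\, x\rangle$, which is the same unordered triple as $Z$. Hence $P$ acts as the identity on $N$, as claimed.

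For the leading tone exchange and relative functions, I would compute $L(Z) = I_{2x+21}(Z) = \langle x+21,\, x+14,\, x+7\rangle$ and $R(Z) = I_{2x+7}(Z) = \langle x+7,\, x,\, x-7\rangle$. Reordering these into the canonical form $\langle y, y+7, y+14\rangle$ of a neutral triad shows that $L(Z)$ is the neutral triad with root $y = x+7$, and $R(Z)$ is the neutral triad with root $y = x-7$. Both reorderings are routine: one just notes that the three entries of each output are spaced by $7$ quarter-steps mod $24$ and picks the lowest one as the root.

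Finally, to verify the change in nature of the root, I would invoke the parity observation already used in Lemma \ref{Lemma:original tone under PLR}: the original tones sit at the even positions on the musical clock, and the new tones at the odd positions. Because $7$ is odd, the roots $x \pm 7$ have opposite parity to $x$, so $L$ and $R$ indeed convert an original-tone root into a new-tone root and vice versa. The only subtle point in the whole argument is the cyclic relabelling of the triple to identify the correct root; once that convention is settled, everything reduces to straightforward arithmetic modulo $24$, so I do not expect any real obstacle.
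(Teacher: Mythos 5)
Your computation is correct and is exactly the ``direct application'' of Definitions \ref{Def:PLR} and \ref{Def:transposition and inversion formulas} that the paper's proof leaves to the reader: $P$ returns the same unordered triple, while $L$ and $R$ send the root $x$ to $x+7$ and $x-7$ respectively, which flips parity and hence the original/new nature of the root. The only cosmetic point is that ``picking the lowest'' entry mod $24$ is better phrased as choosing the unique $y$ with the triple equal to $\langle y, y+7, y+14\rangle$, which your reordering already does implicitly.
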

We can now formally define the $PLR$ groups.    
\begin{theorem} \label{Thm: cyclic}
   The $L$ function on the set of neutral triads $N$ generates the cyclic group $\mathbb{Z}_{24}$.
\end{theorem}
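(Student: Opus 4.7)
The plan is to show that $L$, viewed as a map on the set $N$ of neutral triads, acts as a translation of the root note by $7$ quarter-steps, and then to invoke the fact that $7$ is a generator of $\mathbb{Z}_{24}$.

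First I would carry out the direct computation of $L$ on a generic neutral triad. By Definition \ref{Def:PLR}, for $\langle x, x+7, x+14 \rangle \in N$ we have
\begin{align*}
L\langle x, x+7, x+14 \rangle
&= I_{(x+7)+(x+14)} \langle x, x+7, x+14 \rangle \\
&= I_{2x+21} \langle x, x+7, x+14 \rangle \\
&= \langle -x+2x+21,\ -(x+7)+2x+21,\ -(x+14)+2x+21 \rangle \\
&= \langle x+21,\ x+14,\ x+7 \rangle,
\end{align*}
which, after reordering as an unordered set, is the neutral triad $\langle x+7,\ x+14,\ x+21 \rangle$ with root $x+7$. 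Thus $L$ shifts the root of any neutral triad up by $7$ quarter-steps.

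Next I would exploit the bijection between $N$ and $\mathbb{Z}_{24}$ given by sending a neutral triad to its root. Under this identification, the computation above shows that $L$ corresponds to the map $x \mapsto x+7 \pmod{24}$. Hence iterating $L$ gives $L^k\langle x, x+7, x+14\rangle = \langle x+7k, x+7k+7, x+7k+14\rangle$, and the order of $L$ as a permutation of $N$ is the additive order of $7$ in $\mathbb{Z}_{24}$.

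Finally, since $\gcd(7,24) = 1$, the element $7$ generates $\mathbb{Z}_{24}$, so $L$ has order $24$ and $\langle L \rangle \cong \mathbb{Z}_{24}$. No real obstacle arises here: the argument reduces to a single application of the definition together with the elementary fact that $7$ is coprime to $24$. The only subtlety worth emphasizing in the write-up is that triads are treated as unordered sets, so the reordering of $\langle x+21, x+14, x+7\rangle$ into standard neutral-triad form is legitimate.
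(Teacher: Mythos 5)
Your proposal is correct and follows essentially the same route as the paper's proof: you compute $L\langle x, x+7, x+14\rangle = I_{2x+21}\langle x, x+7, x+14\rangle = \langle x+7, x+14, x+21\rangle$, identify $L$ with the translation $T_7$ of the root, and conclude from $\gcd(7,24)=1$ that $L$ generates a cyclic group of order $24$. The paper adds only incidental remarks (that $R = T_{17} = L^{23}$ and an explicit listing of the $24$ root notes), which are not needed for the statement itself.
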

\begin{proof}
    As noted above, the $L$ and $R$ functions can be defined as translations. In particular, 
    \begin{align*}
        L\langle x, x+7, x+14 \rangle &= I_{2x+21} \langle x, x+7, x+14 \rangle \\
                                       &= \langle x+21, x+14, x+7 \rangle \\
                                       & = \langle x+7, x+14, x+21 \rangle \\
                                       &= T_7 \langle x, x+7, x+14 \rangle.
    \end{align*}
    A similar computation shows that 
    \begin{align*}
        R\langle x, x+7, x+14 \rangle &= \langle x-7, x, x+7\rangle \\
                                       &= T_{17}\langle x, x+7, x+14 \rangle.
    \end{align*}
    Also note that $R = (L)^{23}$.
    Since 7 and 24 are coprime, $L$ generates a cyclic group of order 24. That is, we can apply $L$ 24 times and obtain all neutral triads. The sequence of root notes in this case is 
    $$
    C, D \ssharp, G, A \ssharp, D, F \dsharp, A, C \dsharp, E, G \dsharp, B, D \dsharp, $$
    $$F \sharp, A \dsharp, C \sharp, E \dsharp, G \sharp, B \dsharp, D \sharp, F \ssharp, A \sharp, C \ssharp, F, G \ssharp.
    $$
  Enharmonic pitches can also be used to express this same sequence in different ways.   
\end{proof}

\begin{theorem}\label{Thm:two groups}
The $L$ and $R$ functions on the set of consonant triads $S$ generate two groups isomorphic to $D_{12}$.
\end{theorem}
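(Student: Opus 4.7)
The plan is to exploit the decomposition of $S$ by parity of the root. By Lemma~\ref{Lemma:original tone under PLR} and the parallel statement for new-tone roots, both $L$ and $R$ preserve the subsets $S_{\mathrm{orig}}, S_{\mathrm{new}} \subset S$ consisting of consonant triads whose roots are original (even) tones and new (odd) tones respectively; each subset has cardinality $24$ (twelve major plus twelve minor triads). I will show that on each of these invariant subsets the restrictions of $L$ and $R$ generate a group isomorphic to $D_{12}$; these are the two groups asserted in the theorem.

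The crux is to determine the order of $LR$. We already know that $L^{2}=R^{2}=\mathrm{id}$ on $S$. Applying Definition~\ref{Def:PLR}, $R$ sends the canonical major triad $\langle x, x+8, x+14\rangle$ to the ordered tuple $\langle x+8, x, x+18\rangle$; feeding this tuple into $L$ with the ordering it was given then produces the major triad with root $x+10$. An analogous calculation shows $LR$ sends a minor triad with root $y$ to the minor triad with root $y-10$. Hence $LR$ is the permutation that shifts every major root by $+10$ and every minor root by $-10$ modulo~$24$. Since the even (respectively odd) residues in $\Z_{24}$ form a cyclic group of order $12$ in which $10$ has additive order $12$, the permutation $LR$ has order exactly $12$ on each of $S_{\mathrm{orig}}$ and $S_{\mathrm{new}}$.

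From the relations $L^{2}=R^{2}=(LR)^{12}=1$ I obtain a surjection of $D_{12}$ onto each restricted group $\langle L|_{S_{\star}}, R|_{S_{\star}}\rangle$ for $\star\in\{\mathrm{orig},\mathrm{new}\}$. To see there is no further collapse I would note that $\langle LR\rangle$ already contributes $12$ distinct permutations, and that $L$ is not among them because every power of $LR$ preserves the major/minor dichotomy whereas $L$ exchanges majors and minors. Hence each restricted group has order at least $24$, therefore exactly $24$, and is isomorphic to $D_{12}$, giving the two claimed copies---one acting on $S_{\mathrm{orig}}$ and one on $S_{\mathrm{new}}$.

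The delicate point is the bookkeeping around tuple orderings. The minor triad produced by $R$ is presented in a tuple order different from the root-first canonical order of Definition~\ref{def:triads}, and one must feed the output tuple (rather than its canonical reordering) into the next function---otherwise even the involutivity of $L$ and $R$ appears to fail. Once this convention is fixed, as is implicit in the examples following Definition~\ref{Def:PLR}, the rest reduces to routine arithmetic modulo~$24$.
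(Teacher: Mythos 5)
Your proposal is correct, and it follows the same overall skeleton as the paper: split $S$ by parity of the root into two invariant $24$-element subsets, show $L$ and $R$ act there as involutions whose product has order $12$, and invoke the dihedral presentation. Where you differ is in how the order of $LR$ and the absence of collapse are established: the paper writes out the explicit alternating $R,L$-chain of $24$ triads starting at $C$ (and its quarter-step transpose), reads off $(LR)^{12}=1$ and the distinctness of $R, LR, RLR,\dots$ from that enumeration, and notes $P=R(LR)^3$ along the way; you instead compute $LR$ in closed form as the root-shift $x\mapsto x+10$ on majors and $y\mapsto y-10$ on minors, deduce order $12$ from the additive order of $10$ in $\Z_{24}$, and then rule out collapse of the surjection from $D_{12}$ by observing $L\notin\langle LR\rangle$ since $L$ swaps major and minor. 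This buys you a cleaner distinctness argument than the paper's ``one can see from the sequence'' step, and your explicit remark about feeding the output tuple (rather than its root-first reordering) into the next function is exactly the convention implicit in the paper's table of consonant triads, without which $L^2=R^2=1$ would indeed fail. Two small points to tidy: you rely on $L^2=R^2=\mathrm{id}$ on $S$, which the paper only sketches (it proves $P^2=\mathrm{id}$ and says the argument is similar), so a one-line verification would make your write-up self-contained; and the odd residues of $\Z_{24}$ do not form a group, only a coset --- what you need, and what is true, is that the permutation $r\mapsto r+10$ has order $12$ on each parity class because $10$ has additive order $12$ in $\Z_{24}$.
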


\begin{proof}
    First, consider the set of all consonant triads with original tone roots. This is a subset of $S$ and has order 24. 
    If we begin with a $C$ major triad and alternatively apply $R$ and $L$, the result is the following sequence of triads:
    $$
    C, a, F, d,B\f, g, E\f, c, A\f, f, D\f, b\f, G\f, e\f, B, g\s, E, c\s, A, f\s, D, b, G, e, C.
    $$
    This sequence does not repeat itself until it returns to $C$, it is 24 triads long, and each original tone triad appears. Thus the parallel of $C$ is present in this sequence; meaning we can write $P$ as a composition of $L$ and $R$. More specifically, $P = R(LR)^3$.
    Further, we  know that each of the bijections $R, LR, RLR, ... R(LR)^{11}$ are distinct, and that $(LR)^{12} = 1$. 
 
    If we set $x = LR$ and $y = L$ then $y^2 = 1$, $x^{12} = 1$, and 
    \begin{align}
        yxy &= L(LR)L \notag \\
        &= RL \notag \\
        &= x^{-1}. \notag 
    \end{align}
    This subset of even-numbered consonant triads has precisely 24 elements. Moreover, by Lemma \ref{Lemma:original tone under PLR}, we know any combination of $L$ and $R$ will yield an original tone triad. Thus, by the definition in \cite{algebra}, the original toned triads and the $LR$ functions generate a group isomorphic to $D_{12}$. 

    Now consider the set of all consonant triads with new tone roots (triads where $x$ is odd). This is also a subset of $S$ with order 24. 
    We can similarly apply $R$ and $L$ to $C$~{\hskip-6pt\dsharp}to yield the following sequence of triads:
    $$
    C \dsharp, a\dsharp, F \dsharp, d \dsharp, B\dflat, g\dsharp, E\dflat, c \dsharp, A\dflat, f\dsharp, D\dflat, b\dflat, $$
    $$
    G\dflat, e\dflat, B\sharp, g \ssharp, E\dsharp, c\ssharp, A\dsharp, f\ssharp, D\dsharp, b\dsharp, G\dsharp, e\dsharp, C\dsharp.
    $$
Thus with $s = LR$ and $t = L$ we have the relations $x^{12} = 1$, $t^2 = 1$, and $tst = s^{-1}$. Using a similar argument to above, the set of new tone triads with the $LR$ functions generate a group isomorphic to $D_{12}$.
\end{proof}

One may notice that the two sequences of triads given in the proof are similar; in fact, they are the same sequence except the second is shifted up by a quarter step. We mentioned previously that we can think of the 24-tone system as 12-tone systems, one shifted up by a quarter step. The existence of these groups isomorphic to $D_{12}$ is an algebraic reflection of this concept. Rather than finding one group isomorphic to $D_{24}$, we have found two groups isomorphic to $D_{12}$ which reflect the relationship between the original and new tones. 

\begin{example}
The algebraic result from Theorem \ref{Thm:two groups} was observed from the musical point of view by Wyschnegradsky in \cite{manual}, where he describes the
24-tone system as two 12-tone systems interlocked.  This observation is realized very clearly in the excerpt in Figure \ref{fig:two_piano}.

\begin{figure}[htbp]
      \centering
      \includegraphics[width=0.85\linewidth]{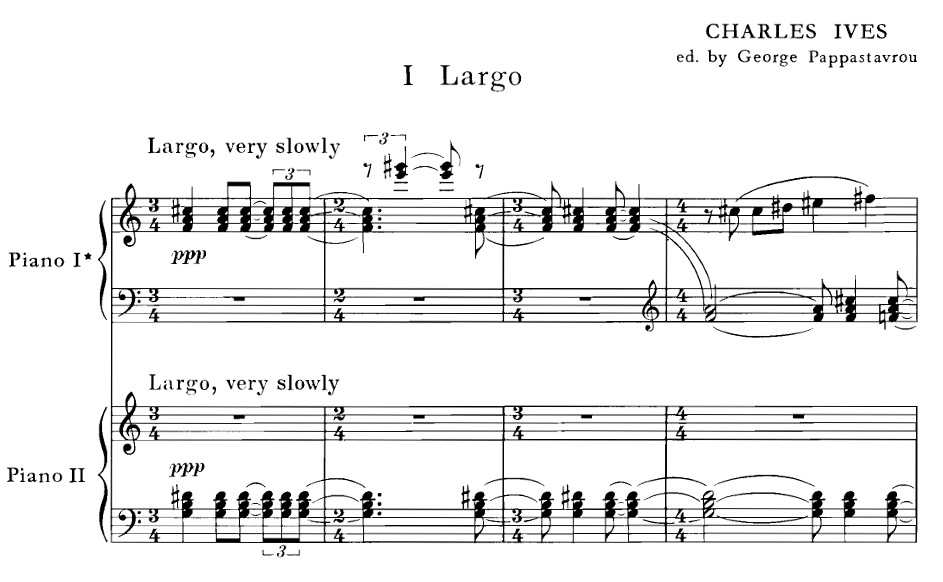}
    \caption{ 
Excerpt from: Three Quarter-tone Pieces by Charles Ives.
\copyright\ 1968 by C.F. Peters Corporation.
Used by permission of Faber Music Ltd. All Rights Reserved. [EP66285].
\textit{*Piano I to be tuned $\frac{1}{4}$ tone higher.}
}
      \label{fig:two_piano}
  \end{figure}

\normalfont
The music score in Figure \ref{fig:two_piano} is an excerpt from a composition for piano duet where one piano (Piano I) is tuned a quarter-step above the other piano (Piano II). So, Piano I plays exclusively new tones while piano II plays original tones, realizing not only Wyschnegradsky's musical observation but also the mathematical underlying structure. 
\end{example}

\section*{Acknowledgments}
The first-named author would like to thank her mentor, Dr. Carmen Rovi, for her help, guidance, and coffee throughout this process. Both authors are grateful to Dr. Emily Peters for initially recommending \cite{dihedralgroups}, the paper that this study extends and to Dr. Dongryul Lee for sharing literature on microtonal music. Special thanks also go to Cecily Bartsch and Grace Houghton for their thoughtful comments during the development of this work. The first-named author also thanks her mother for her endless love and support.
The second-named author gratefully acknowledges support from the College of Arts and Sciences at Loyola University Chicago for summer research funding, as well as the AMS and Simons Foundation for support through an AMS-Simons Research Enhancement Grant for Primarily Undergraduate Institution Faculty. Both authors are grateful to Farber Music Ltd. for allowing them to reproduce the music excerpt by Charles Ives.

\bibliographystyle{plain}
\bibliography{refs}

\medskip

\medskip

\noindent Veronica Flynn: Department of Mathematics \& Statistics, Loyola University Chicago. \texttt{vflynn@luc.edu}

\noindent Carmen Rovi: Department of Mathematics \& Statistics, Loyola University Chicago.  \\ \texttt{crovi@luc.edu}


\end{document}